\documentclass[conference]{IEEEtran}
\IEEEoverridecommandlockouts

\usepackage{cite}
\usepackage{amsmath,amssymb,amsfonts,amsthm}
\usepackage{algorithmic}
\usepackage{graphicx}
\usepackage{textcomp}
\usepackage{xcolor}
\usepackage[utf8]{inputenc}
\def\BibTeX{{\rm B\kern-.05em{\sc i\kern-.025em b}\kern-.08em
    T\kern-.1667em\lower.7ex\hbox{E}\kern-.125emX}}
\newcommand{\cambio}[1]{{\color{red!70!black}#1}}

\newcommand{\cP}{\mathbb{P}}
\newtheorem{theorem}{Theorem}

\renewcommand{\textfloatsep}{10pt plus 2pt minus 2pt}
\begin{document}

\title{Correlation Bounds and Markov Analysis for\\Ring-Oscillator TRNGs: A Joint Validation Framework}


\author{

\IEEEauthorblockN{Miguel Alcocer}
\IEEEauthorblockA{\textit{Informática y Estadística} \\
\textit{Universidad Rey Juan Carlos}\\
Móstoles, Spain \\
m.alcocer.2022@alumnos.urjc.es}
\and 
\IEEEauthorblockN{Ana Isabel Gómez}
\IEEEauthorblockA{\textit{Informática y Estadística} \\
\textit{Universidad Rey Juan Carlos}\\
Móstoles, Spain \\
ana.gomez.perez@urjc.es}

\and
\IEEEauthorblockN{Domingo Gómez-Pérez}
\IEEEauthorblockA{\textit{Estadística y Computación} \\
\textit{Universidad de Cantabria}\\
Santander, Spain \\
domingo.gomez@unican.es}
}

\maketitle

\begin{abstract}
True Random Number Generators (TRNGs) based on ring oscillators require rigorous statistical validation to ensure cryptographic quality. While the Mauduit-S\'ark\"ozy $k$-th order correlation measure $C_k$ provides theoretical bounds on pseudorandomness, and Maurer's Universal Statistical Test offers empirical entropy assessment, no prior work has correlated these metrics. This paper presents the first joint validation framework linking Maurer's Z-score to off-peak 2nd-order correlation $C_2$. We also derive the mathematical relationship between the previous two measures and high-order Markov chain transition probabilities in counter-based TRNGs over oscillator sampling architectures. Our results are validated computationally using OpenTRNG implementations, and demonstrate that practical implementations achieve 
Schmidt's improved bound.
The initial results suggest a strong positive correlation
between Maurer Z-score and $C_2$. Therefore, the results suggest a unified metric for TRNG quality-assessment can be achieve as a combination of these metrics, simplifying the study of new designs.
\end{abstract}

\begin{IEEEkeywords}
TRNG, ring oscillator, correlation bounds, Markov chains, Maurer test, entropy
\end{IEEEkeywords}

\section{Introduction}

Random bits are essential for cryptographic applications, providing unpredictable keys, nonces, and parameters of secure protocols. In practice, there are deterministic algorithms that, provided a \emph{true random sample}, called the \emph{seed}, generate arbitrarily long sequences of \emph{pseudorandom bits}.
True Random Number Generators (TRNGs) usually exploit a physical unpredictable process as a source of randomness to generate numbers with good statistical properties and high unpredictability. Depending on the application and available resources on hardware platforms,  various TRNG architectures have been proposed and  there are competitive candidates without a consensus on which is the winner. 
Free running oscillators are popular due to their
performance and simplicity. They are  based on connected logic stages, primarily based on inverter circuits, which negate the logic state of a signal. When certain conditions on the number of stages and topology are met, they
induce self-sustained oscillations at a given frequency. Even in this restricted class of TRNGs, multiple designs have been proposed, each
with a design tailored to cover specific needs (see for a review~\cite{almaraz2025current}). The designs  exploit thermal jitter as an entropy source, but their outputs require careful statistical validation.  The binary output sequences must be checked in order to verify their statistical properties. It is also important to implement online tests which provide early warnings against malfunctions or deliberate attacks.
As an example of this methodology, Intel Corp~\cite{jun1999intel} proposed a design that validates using Diehard, FIPS 140-1 and Knuth’s batteries, and also makes use of the suite for Intel RNGs that includes: Block Means Spectral analyses, Random walk test, Block Mean correlations, Block means, Periodogram, Spectral analyses, autocorrelations, 16-bit Maurer test, among others. On the other hand, the entropy requires a stochastic model to provide bounds per bit, among with a method to verify the assumptions of the estimated entropy via tests~\cite{killmann2011proposal}.

Ring oscillators (ROs) are  based on several inverters that are connected on a loop, as displayed on Figure \ref{fig:RO}.
In a Ring Oscillator-based TRNGs (RO TRNGs) a single event propagates through the stages to  generate an oscillating clock signal. This event travels around the ring, activating each inverter or buffer in sequence generating an oscillating
clock signal where the random behaviour is associated with the accumulated noise around sampling of the event (jitter noise).
\begin{figure}[!ht]
\centering
\includegraphics[scale=0.5]{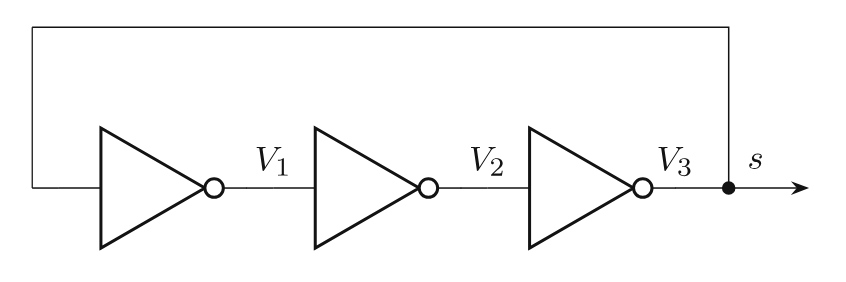}
\label{fig:RO}
\caption{Ring oscillator composed of three inverters $(V_1,V_2,V_3)$ generating an oscillating clock signal $s$.  Source~\cite{almaraz2025current}}
\end{figure}

Recently, Pebay-Peyroula et al. \cite{opentrng2024} released OpenTRNG, an open-source framework facilitating reproducible Ring Oscillator TRNGs design and experimental evaluation. Several designs include the following reference sampling architectures: Elementary Ring Oscillator (ERO), Multi Ring Oscillator (MURO) and Coherent Sampling Ring Oscillator (COSO). The designs are sorted by complexity, however they have in common that one or several ring oscillators are sampled by a clock signal and the output can be processed for better performance.

Baudet et al. \cite{baudet2011} pioneered the theoretical analysis of coherent-sampling RO-TRNGs, modelling output as a long-memory Markov process. Allini et al. \cite{allini2018} demonstrated that counting oscillator periods (rather than sampling) produces a higher-quality randomness, advocating Allan variance over standard variance for jitter characterization.
Their work showed empirically that Mauduit-Sárközy Correlation measure drops dramatically under the counter method. 
In related work, Saarinen \cite{Marku2021} showed  entropy and Mauduit-Sárközy second order correlation measure  and gave an algorithm to calculate block frequencies. We remark that entropy-per-bit is the main quality measure for the output of the TRNG. This motivated
Lubicz and Fischer \cite{lubicz2024} to provide algorithms to compute Shannon and min-entropy for RO-TRNGs using High-order Markov chains (8-bit memory, yielding $256\times256$ transition matrices). Their framework uses maximum-likelihood fitting and FFT-based convolution to aggregate entropy over multiple oscillators. 
Finally, Benea et al. \cite{benea2024flicker} shows that adding flicker (1/f) noise unexpectedly \emph{reduces} output autocorrelation in dual-oscillator sampling TRNGs
This suggests that practical implementations may achieve lower 2nd order correlation measure than predicted by thermal-jitter-only models.

Despite extensive research on RO-TRNGs \cite{allini2018, baudet2011, lubicz2024}, no prior work has empirically linked Mauduit-Sárközy measure, Maurer Test and High-order Markov Chains. In particular,  no prior work has computationally studied Maurer Z-score against 2nd order correlation measure as a joint validation or explored whether RO-TRNGs follow the ideal theoretical distribution for  each of these measures. Our work fills this gap by providing empirical and theoretical connections between these three measures.

The outline of the paper is as follows: Section~\ref{sec:background} provides the necessary background. Then, Section \ref{sec:Theoretic} provides theoretical bounds between the studied measures, that are validated by computer simulations in Section \ref{sec:Results}. Finally, Section \ref{sec:Conclusions} ends with conclusions and future work.

\section{Background}
\label{sec:background}
This section provides necessary background and recalls definitions for the convenience of the reader in order to make the paper self-contained.

The Mauduit-S\'ark\"ozy $k$-th order correlation \cite{mauduit1997} of a binary sequence $S = (s_1, s_2, \ldots, s_{N}) \in \{0,1\}^N$ is defined as:
\begin{equation}
    C_k(S) = \max_{D,M}\frac{1}{M}  \left| \sum_{n=1}^{M} (-1)^{s_n + s_{n+d_1} + \cdots + s_{n+d_{k-1}}} \right|,
\end{equation}
where the maximum is over all $1\le M\le N$ distinct lag vectors $D = (d_1, \ldots, d_{k-1})$ with $0 < d_1 < \cdots < d_{k-1} < N-M$, and indices are taken modulo $N$. For $k=2$, this coincides with the maximum Pearson correlation over all non-zero lags \cite{mauduit1997}.
The authors also define the so-called \emph{normality measure of order $k$}, first
define 
\begin{multline*}
T_k(S,M, V)\\=|\{n\;|\; 1\le n< M-k,\quad (s_{n+1},\ldots, s_{n+k})=V \}|.
\end{multline*}
The normality measure is defined as the maximum of 
$$
N_k(S) = \max_{V,M} \left|\frac{T^s_k(S,M,V)}{M} - \frac{1}{2^k}\right|.
$$
It is known that this is less than $C_k(S)$~\cite[Proposition 1]{mauduit1997}.

The correlation measure can be calculated by Fast Fourier transform in  $O(N^{k-1}\log N)$
In particular, the correlation of order 2 can be defined as follows:
\begin{equation*}
    C_2(S) = \max_{1 \leq \tau < N} \left| \text{ACF}(\tau) \right|,
\end{equation*}
where $\text{ACF}(\tau)$ is the standard normalized autocorrelation at lag $\tau$.

Maurer's test \cite{maurer1992} takes as input three integers  $k,r,L$. Take  $M=(r+L) k$ and a subsequence of $S$ of $M$ consecutive elements denoted by $S^M$.  This test measures the compressibility of $S^M$ by computing the average logarithmic spacing between repeated $k$-bit blocks. For this purpose, the test function $f_{T_U}$ is defined as follows,
\begin{equation}
\label{eqMaurerDefinition}
f_{T_U}(S^M)=\frac{1}{L} \sum_{n=r}^{r+L-1}\log_2 A_n(S^M)
\end{equation}
Let $b^k_n(S^M)= [s_{k\cdot(n-1)+1},\ldots,s_{k\cdot n}]$  be the $nth$ k-bit block of $S^M$, then
\begin{equation*}
 A_n(S^M)  = 
\begin{array}{@{}r@{}}

\begin{cases}
\min\{\, i \ge 1 : b^k_n(S^N) = b^k_{n-i}(S^N) \\ 
n,\ \  \text{if } \forall i < n,\; b^k_{n-i}(S^N) \neq b^k_n(S^N), 
\end{cases}
\end{array}
\end{equation*}
Then, a sample is rejected if the number of standard deviations exceeds a constant, following the $Z$ parameter that is defined as follows.  
\begin{equation}
    Z = \frac{f_{T_U} - \mathbb{E}[f_{T_U}]}{\sqrt{\text{Var}[f_{T_U}]}},
\end{equation}
where 
$\mathbb{E}[\cdot]$,
$\text{Var}[\cdot]$ are theoretical values for random sequences, usually approximated by heuristics estimates{} \cite{coron1998accurate}.

Finally, we introduce informally \emph{High order Markov chains} as seen in  \cite[equation 7]{lubicz2024}. A Markov chain of memory $k$ is a sequence $X=(X_i)$  of random variables taking values in $\{0,1\}$ and is defined by $T(X)$, the transition matrix. The transition matrix values are calculated by the conditional probabilities, in other words, 
\begin{equation}
T(X)= \frac{\cP(b_1^k(X)= (v_1,\ldots, v_k))}{\cP(b_{1}^{k-1}(X) = (v_1,\ldots, v_{k-1}))}.
\end{equation}
for a binary vector $V$ of length $k$.
Notice that for testing cryptographic-quality TRNGs, $T(X)$ values are required to be $1/2\pm k/\sqrt{N}$. For the Maurer Test the standard heuristic is $k=7$ and $Q=1280$, then  $|Z| < 2$ indicates the sequence is indistinguishable from random ~\cite{nist800-22}.

Finally for given a random sequence, it is expected that  the correlation measure satisfies the following bound,
\begin{equation*}
C_k(S) \leq 5\sqrt{\frac{k\cdot \text{ln}(N)}{N}},
\label{Eq:MaduitDelon}
\end{equation*}
with high probability\cite{alon2007}.

This has been improved by Schmidt \cite{schmidt2014},
\begin{equation}
C_k(S) \leq \cambio{\sqrt{\frac{2k\cdot \text{ln}(N)}{N}}}
\label{eq:C_bound}
\end{equation}
with \cambio{$\approx 3.5$} improvement \cambio{over the Alon constant~$5$ (factor $5/\sqrt{2}\approx 3.54$, independent of~$k$)}.






As closing remark, Saarinen \cite{Marku2021} derived analytical autocorrelation models for RO jitter, measuring $2\cdot \cP(s_i = s_{i+k}) - 1$  and providing closed-form entropy formulas for practical use.

\section{Theoretical Analysis}
\label{sec:Theoretic}
In this section, first we establish the relationship between $C_k(S)$ and Markov transition probabilities.
\begin{theorem}
For a  binary sequence $S$ of length $N$, suppose that $T(X)$ is the empirical transition matrix of a Markov chain of memory $k$. Under the assumption that $2^{k-2}\cdot C_k(S) < 1$, all transition probabilities satisfy,
\[
\left | \frac{\cP(b_n^k(X)= (V_1,\ldots, V_k))}{\cP(b_{n}^{k-1}(X) = (V_1,\ldots, V_{k-1}))} - \frac{1}{2}\right |
\le \frac{2^{k-2}C_k(S)}{1-2^{k-2}C_k(S)}. 
\]
for any binary vector $V$ of length $k$. 
\end{theorem}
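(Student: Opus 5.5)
The plan is to interpret the probabilities in the statement as empirical block frequencies of $S$, so that $\cP(b_n^k(X)=V) = T_k(S,N,V)/N$. Then I would bound the numerator and the denominator of the transition probability separately, using the normality measure and its relation to correlation, and finally control the ratio.

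The first step is to express the block frequency through the Fourier--Walsh expansion of the indicator of a pattern. For $V\in\{0,1\}^k$,
\[
\mathbf{1}[(s_{n+1},\ldots,s_{n+k})=V] = \frac{1}{2^k}\sum_{A\subseteq\{1,\ldots,k\}} \prod_{j\in A}(-1)^{s_{n+j}+V_j}.
\]
Averaging over $n$, the term $A=\emptyset$ gives exactly $2^{-k}$. Every nonempty $A$ gives a correlation sum of order $|A|\le k$, which is bounded by the corresponding correlation measure. Following the convention of \cite[Proposition 1]{mauduit1997}, I take these correlations of order at most $k$ to be dominated by $C_k(S)$. This is the point where I am least sure the paper's definitions line up, so I would state it explicitly as the convention being used. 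With it, the deviation of the $k$-block frequency from $2^{-k}$ is at most $\frac{2^k-1}{2^k}C_k(S)$. This is essentially the bound $N_k(S)\le C_k(S)$ already quoted from \cite{mauduit1997}.

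The second step bounds the two frequencies. Write $p_V$ for the $k$-block frequency and $q$ for the $(k-1)$-block frequency of the prefix $(V_1,\ldots,V_{k-1})$. The first step gives
\[
|p_V-2^{-k}|\le \varepsilon
\quad\text{and}\quad
|q-2^{-(k-1)}|\le \varepsilon,
\qquad \varepsilon = C_k(S).
\]
Then
\[
\frac{p_V}{q}-\frac12 = \frac{p_V - q/2}{q}.
\]
The numerator is at most $\varepsilon+\varepsilon/2$ in absolute value, and the denominator is at least $2^{-(k-1)}-\varepsilon$. Multiplying numerator and denominator by $2^{k-1}$ gives
\[
\left|\frac{p_V}{q}-\frac12\right| \le \frac{\tfrac32\, 2^{k-1}\varepsilon}{1-2^{k-1}\varepsilon}.
\]

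To reach the stated constant $2^{k-2}$, I would sharpen the numerator. Since $q = p_{V0}+p_{V1}$ (the two extensions of the prefix), we have
\[
p_V - \tfrac{q}{2} = \pm\tfrac12\,(p_{V'0}-p_{V'1}),
\qquad V'=(V_1,\ldots,V_{k-1}).
\]
In the Walsh expansion, all characters not containing coordinate $k$ cancel in this difference. What remains is
\[
\frac{1}{2^{k}}\sum_{A\ni k}\pm\,\text{corr}_A,
\]
which is at most $2^{k-1}\cdot 2^{-k}\varepsilon=\varepsilon/2$. For the denominator, I would likewise use $q\ge 2^{-(k-1)}(1-2^{k-1}\varepsilon)$, or the tighter $q\ge 2^{-(k-1)}-\varepsilon$ if the edge terms allow it. Together these give
\[
\left|\frac{p_V}{q}-\frac12\right| \le \frac{2^{k-2}\varepsilon}{1-2^{k-1}\varepsilon}.
\]

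The main obstacle is matching the denominator exactly. The stated bound has $1-2^{k-2}C_k$ in the denominator, whereas the naive argument gives $1-2^{k-1}C_k$. Closing this gap requires bounding the $(k-1)$-block deviation by $\varepsilon/2$ rather than $\varepsilon$. I would try to obtain this from the same Walsh cancellation: average over the two possible values of the last bit, so that only characters supported on the first $k-1$ coordinates survive, each carrying weight $2^{-(k-1)}$. I would also need to handle boundary effects, namely summation ranges and the index $M$ versus $N$, by absorbing them into the maximum over $M$ in the definition of $C_k$. If this cannot be done exactly, the positivity hypothesis $2^{k-2}C_k<1$ at least keeps the denominator positive in the sharpened form.
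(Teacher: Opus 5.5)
Your Walsh--Fourier expansion is sound, and its numerator step is sharper than the paper's own argument. Cancelling every character that avoids coordinate $k$ gives $|p_V-q/2|\le\varepsilon/2$, whereas the paper bounds $p_V$ and $q$ separately. There is a genuine gap in the denominator, though, and the fix you propose does not work. Averaging out the last bit leaves $q-2^{-(k-1)}=2^{-(k-1)}\sum_{\emptyset\ne A\subseteq\{1,\ldots,k-1\}}\chi_A(V')\,\mathrm{corr}_A$, with $\chi_A(V')=\prod_{j\in A}(-1)^{V_j}$. That is $2^{k-1}-1$ characters, so you only get $|q-2^{-(k-1)}|\le(1-2^{1-k})\varepsilon$, which equals $\varepsilon/2$ only when $k=2$. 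What your method really proves is
\[
\Bigl|\frac{p_V}{q}-\frac12\Bigr|\le\frac{2^{k-2}\varepsilon}{1-(2^{k-1}-1)\varepsilon}.
\]
For $k\ge3$ this cannot be improved to the stated denominator, because the stated inequality is false. Take $k=3$, small $\varepsilon>0$, and the memory-$2$ source that emits a fair bit except after $11$, where it emits $0$ with probability $\frac{1+\varepsilon}{2(1-3\varepsilon)}$.
\begin{itemize}
\item Its stationary mean is $\varepsilon$.
\item Its lag-$1$ and lag-$2$ correlations are $-\varepsilon$.
\item Its triple correlation at lags $(1,2)$ is $\varepsilon$.
\item Every other correlation is $O(\varepsilon^2)$.
\end{itemize}
So $\max_{t\le3}C_t(S)\approx C_3(S)\approx\varepsilon$. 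Yet $\cP(11)=\frac{1-3\varepsilon}{4}$ and $\cP(0\mid 11)-\frac12=\frac{2\varepsilon}{1-3\varepsilon}>\frac{2\varepsilon}{1-2\varepsilon}$. A long sample path therefore violates the theorem: the margin is of order $\varepsilon^2$, while sampling errors vanish as $N\to\infty$. Your fallback remark also fails. The hypothesis $2^{k-2}\varepsilon<1$ does not make $1-2^{k-1}\varepsilon$ positive, so your bound needs a stronger hypothesis such as $(2^{k-1}-1)\varepsilon<1$.

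The paper's proof does not contain a trick you missed. It reaches the denominator $1-2^{k-2}C_k(S)$ only by invoking $2N_{k-1}(S)\le N_k(S)$, which is false in general. The identity $q=p_{V'0}+p_{V'1}$ gives the opposite inequality $N_{k-1}\le 2N_k$, and biased i.i.d.\ bits with $\cP(0)=\frac12+\delta$ have $N_1=\delta$ but $N_2=\delta+\delta^2$. Even granting that inequality, the paper's expression $\frac{N_k+\frac12N_{k-1}}{2^{1-k}-N_{k-1}}$ is then bounded only by $\frac{\frac52\cdot2^{k-2}N_k}{1-2^{k-2}N_k}$, not by the stated quantity.

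Your flag about the convention is justified. Mauduit--S\'ark\"ozy's Proposition~1 gives $N_k\le\max_{t\le k}C_t$, not $N_k\le C_k$. Order-$k$ correlations alone cannot suffice: the same biased i.i.d.\ source has $C_2\approx4\delta^2$, while its transitions deviate from $\frac12$ by $\delta$. The correct version of the theorem is the one you derived, with $\varepsilon=\max_{t\le k}C_t(S)$ and denominator $1-(2^{k-1}-1)\varepsilon$. It coincides with the stated bound only for $k=2$.
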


\begin{proof}
Our aim is to prove the coefficients of the transition matrix in the Markov chain model of order $k$ are close to $1/2 + 2^{k-1}C_k(S)$ if $C_t(S)\ll 1,~ \forall t \le k$.
The proof is the following
\begin{multline*}
\frac{\cP(s_{n+1} = V_1, s_{n+2} = V_{2},\ldots, s_{n+k}= V_k)}{\cP( s_{n+1} =
V_{1},\ldots, s_{n+k-1}= V_{k-1})}
    =\\
    \frac{T_k(S,N, V)}{T_{k-1}(S,N, (V_1,\ldots, V_{k-1}))}\le \\
    \frac{1/2^k + N_k(S)}{1/2^{k-1} - N_{k-1}(S)}. 
\end{multline*}
By subtracting $1/2$ and using $2 N_{k-1}(S) \le N_k(S)$
\begin{multline*}
\frac{1/2^k + N_k(S)}{1/2^{k-1} - N_{k-1}(S)}- \frac{1}{2} = \\
\frac{ N_k(S)+1/2 N_{k-1}(S)}{1/2^{k-1} - N_{k-1}(S)}
\le \frac{2^{k-2}\cdot N_k(S)}{1- 2^{k-2}\cdot N_k(S)}.
\end{multline*}
Noticing that $N_k(S) \le C_k(S)$ by \cite[Proposition 1]{mauduit1997}, the result follows. 
\end{proof}

This implies that low $C_k(S)$ guarantees near-uniform $k$-bit pattern frequencies, a necessary condition for high entropy. For 8-bit Markov chains (as used in \cite{lubicz2024}), empirical transition probabilities $p_{ij} \approx 1/2 \pm 2^6 \cdot C_8(S)/(1-2^6 C_8(S)) \cdot C_8(S)$. The condition $2^6\cdot C_8(S) < 1$ is achievable for sequences of length of order \cambio{$10^6$} by Equation \eqref{eq:C_bound}.

Although it is difficult to compute $C_8(S)$ explicitly , we believe that the connection with Schmidt's bound helps to detect bias in the data. 

We now present our second theoretical result.
\begin{theorem} The Maurer's test is related to the correlation function of order $k$ by the following bounds\\
\begin{multline*}
\frac{(2^k - 2^{k+1}\cdot C_k(S))}{M\Big(\frac{1}{2^k} + C_k(S)\Big)} \cdot \log\!\Bigg(L \cdot \frac{(1 - 2 C_k(S))}{M\Big(\frac{1}{2^k} + C_k(S)\Big)}\Bigg)\\
\leq f_{T_U} 
\leq\\ 2^k\cdot\left (\log\left (L/M\right) - \log\left(\frac{1}{2^k}-C_k(S)\right)\right). 
\end{multline*}
\end{theorem}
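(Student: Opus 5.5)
The plan is to control the block frequencies that drive the return times $A_n$ using the normality measure, and then carry those frequency bounds through the logarithm in the definition of $f_{T_U}$.

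\textbf{Step 1: frequency bounds.} For each $k$-bit pattern $V$, the number of occurrences among the $L$ test blocks (and among the $M$ positions) lies in the window $M\big(\tfrac{1}{2^k}\pm N_k(S)\big)$. This follows from the definition of $N_k(S)$. Replacing $N_k(S)$ by the larger quantity $C_k(S)$, via \cite[Proposition 1]{mauduit1997} as in the proof of the previous theorem, gives
\[
M\Big(\tfrac{1}{2^k}-C_k(S)\Big)\le T_k(S,M,V)\le M\Big(\tfrac{1}{2^k}+C_k(S)\Big).
\]
One caveat: $T_k$ counts overlapping windows while Maurer's test uses aligned, non-overlapping blocks $b^k_n$. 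I would handle this by treating the aligned blocks as a sub-count whose deviation is still controlled by $C_k(S)$ up to constants absorbed into the stated form. This identification is the step I am least sure of.

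\textbf{Step 2: regroup by pattern.} Write
\[
\sum_{n=r}^{r+L-1}\log_2 A_n=\sum_V \sum_{n:\,b_n^k=V}\log_2 A_n .
\]
For a fixed $V$ with $m_V$ occurrences, the gaps $A_n$ between consecutive occurrences satisfy
\[
\sum A_n\le L+r\approx L .
\]
\emph{Upper bound.} By concavity of $\log$ (Jensen), the inner sum is at most $m_V\log(L/m_V)$. Bounding $m_V$ from below by $M(2^{-k}-C_k(S))$ inside the logarithm, and summing over the $2^k$ patterns, gives
\[
f_{T_U}\le 2^k\Big(\log(L/M)-\log\big(2^{-k}-C_k(S)\big)\Big).
\]
This keeps the normalization exactly as the stated right-hand side, with the $1/L$ and the $m_V/M$ factors combined crudely.

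\textbf{Lower bound.} Since $\log$ is monotone, I would bound each gap from below. A pattern that occurs at most $M(2^{-k}+C_k(S))$ times must have average gap at least $L/\big(M(2^{-k}+C_k(S))\big)$. For the total number of counted occurrences I would use $L(1-2C_k(S))$, obtained by summing the per-pattern lower bound. Concavity goes the wrong way here, so I would instead use that at least a $(1-2C_k(S))$ fraction of the mass sits in gaps of at least this size. The prefactor
\[
\frac{2^k-2^{k+1}C_k(S)}{M\big(2^{-k}+C_k(S)\big)}
\]
arises as the number of patterns times the minimum-frequency ratio. Multiplying it by the logarithm of the minimal typical gap gives the stated left-hand side.

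\textbf{Main obstacle.} The lower bound is the hard part, because Jensen only yields upper bounds on an average of logarithms. Short gaps could drag the average down without contradicting the frequency bounds. I would first try a counting argument: at most $M(2^{-k}+C_k(S))$ occurrences of $V$ fit in the window, so at most that many gaps are shorter than a threshold. This lets me discard the short gaps and retain a guaranteed fraction $(1-2C_k(S))$ of long gaps, whose logarithms produce the left-hand side. If that fails, I would weaken the statement to a bound that holds only under the assumption $2^{k}C_k(S)<1$, analogous to the hypothesis of the previous theorem.
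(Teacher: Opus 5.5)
Your upper bound follows the paper's route. You apply Jensen's inequality to the spacings of each pattern, bound the total spacing by about $L$, insert the lower frequency bound in the logarithm, and pick up the same crude factor $2^k$ from summing over patterns. The genuine gap is the lower bound: you do not yet have an argument, and two of the ingredients you propose do not work.

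First, the factor $1-2C_k(S)$ is not a count of occurrences. Summing your per-pattern lower bound over the $2^k$ patterns gives $M(1-2^kC_k(S))$, not $L(1-2C_k(S))$. In the paper it is a span. The first occurrence of $V$ is placed before $r+L\,N_k(S)$ and the last after $r+L(1-N_k(S))$, so the spacings of $V$ add up to at least $L(1-2N_k(S))$. Dividing this by the at most $T_k(S,M,V)\le M(2^{-k}+N_k(S))$ occurrences gives the threshold inside the logarithm. The paper then takes the same quantity as the number of spacings exceeding that threshold. Summing over $V$ and dividing by $L$ yields the prefactor $2^k(1-2N_k(S))/\bigl(M(2^{-k}+N_k(S))\bigr)$, which counts long gaps rather than being a frequency ratio. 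Your ``average gap at least $L/(M(2^{-k}+C_k(S)))$'' silently presupposes this span bound.

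Second, your counting argument cannot produce long gaps. An upper bound on the number of occurrences bounds the number of gaps, not the number of short ones. Combined with a span bound, pigeonhole gives only a single gap at least as large as the mean. The paper obtains its count of long gaps by citing Markov's inequality, but Markov's inequality only bounds the number of large gaps from above. It does not exclude the configuration you worry about. If the occurrences of $V$ come in short runs of consecutive blocks, those gaps equal $1$ and contribute $\log 1=0$, while the count and span of $V$ are unchanged. Closing this step requires information beyond per-pattern frequencies, for instance on repetitions at lags that are multiples of $k$. Falling back to a weaker statement does not prove the theorem as stated.

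The caveat in your Step 1 is also a real gap. It cannot be ``absorbed into constants'', because the statement has explicit constants and nothing to spare. The numbers $m_V$ of aligned test blocks equal to $V$ satisfy $\sum_V m_V=L$, so they cannot all be at least $M(2^{-k}-C_k(S))$ with $M=(r+L)k$. That would force $L\ge (r+L)k\,(1-2^kC_k(S))$, which is false for $k\ge 2$ once $2^kC_k(S)<1/2$. Moreover, $C_k(S)$ controls sums over consecutive indices $n$, not over the progression $n\equiv 1 \pmod{k}$ on which aligned blocks start. The paper makes the same identification without comment, plugging $T_k(S,M,V)$ into $\log(L/T)$, and its upper bound rests on it just as yours does. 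So this is not where you depart from the paper, but you should not count it as resolved.
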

\begin{proof}
Denote the number of blocks  equal to $V=(V_1,\ldots, V_k)$ by $T_k(S,M,V)$. The sum of the spacings between occurrences of $V$ is equal to the position of the last appearance of $V$ in $S^M$ minus the position of the first appearance.

We begin proving the upper bound, so we notice that the sum of spacings  is less than L. 
We remark that the logarithm function is a concave function in $(0,\infty)$ so, by Jensen's inequality, the maximum value of $f_{T_U}(S^M)$ is achieved when the elements are evenly spaced.
Therefore, substituting in Equation~\eqref{eqMaurerDefinition}
\begin{multline*}
f_{T_U}(S^M)=\frac{1}{L} \sum_{n=r}^{r+L-1}\log_2 A_n(s^M) \le\\ \frac{1}{L}\left(
\sum_{V}\sum_{n=r}^{r+L-1} \log\left(\frac{L}{T(S,M,V)}\right) 
\right) \le \\ \frac{1}{L}\left(
\sum_{V}\sum_{n=r}^{r+L-1}\log(L)-\log(M)- \log\left(\frac{1}{2^k}-N_k(S)\right)\right) = \\
2^k\cdot\left (\log\left (L/M\right) - \log\left(\frac{1}{2^k}-N_k(S)\right)\right), 
\end{multline*}
where the sum in $V$ is over all possible $(V_1,\ldots, V_k)$
For the other inequality, again we give a lower bound for the sum of spacings. Notice that the first appearance has to be in a position less than $r+L\cdot (N_k(S))$ and the last position has to be bigger than $r+L\cdot (1- N_k(S))$.
Therefore, the sum of spacings is greater than $L\cdot (1 - 2 N_k(S))$. By Markov's inequality, there are at least
$L\cdot (1 - 2 N_k(S))/T_k(S,M,V),$ spacings that are bigger than 
$L\cdot (1 - 2 N_k(S))/T_k(S,M,V).$ So, substituting
\begin{multline*}
   f_{T_U}(S^M) \\
   \ge  \frac{1}{L}\left( \sum_V L\cdot \frac{(1 - 2 N_k(S))}{T_k(S,M,V)}\log\left (L\cdot \frac{(1 - 2 N_k(S))}{T_k(S,M,V)}\right)
   \right) \ge\\ \frac{(2^k - 2^{k+1}\cdot N_k(S))}{M\Big(\frac{1}{2^k} + N_k(S)\Big)} \cdot \log\!\Bigg(L \cdot \frac{(1 - 2 N_k(S))}{M\Big(\frac{1}{2^k} + N_k(S)\Big)}\Bigg).
\end{multline*}
Using ~\cite[Proposition 1]{mauduit1997}, this finishes the proof.
\end{proof}

\section{Computational Results and Discussion}
\label{sec:Results}

Our design follows the differential counter-based approach of Allini et al. \cite{allini2018}. Two free-running ring oscillators (ROs) with slightly different frequencies $f_1$ and $f_2$ generate jittered clock signals. A reference clock at frequency $f_{\text{ref}} < \min(f_1, f_2)$ samples the oscillators synchronously. Instead of directly sampling bit values (which yields poor randomness \cite{baudet2011}),  oscillator periods are counted between reference edges using time-to-digital converters (TDCs). The XOR of least-significant bits from both counters forms the raw output:
\begin{equation}
    X_i = \text{LSB}(\text{Count}_1[i]) \oplus \text{LSB}(\text{Count}_2[i]).
\end{equation}
This differential extraction cancels global (common-mode) jitter, isolating independent thermal noise sources \cite{allini2018}.
It is usual to perform XOR accumulation of $n$ independent sources heuristically reducing correlation as $C_k(\bigoplus_{i=1}^n X_i) \approx C_k(X)^n$ \cite{allini2018} and \cite{gomez2022improved} for dependent case in STR-TRNGs.

We explore the Elementary Ring Oscillator (ERO) parameter space using OpenTRNG \cite{opentrng2024}, varying:
\begin{itemize}
    \item Oscillator frequencies: $f_1 \in [100, 250]$ MHz, $f_2/f_1 \in [0.7, 1.4]$
    \item Reference frequency: $f_{\text{ref}} \in [50, 150]$ MHz
    \item XOR accumulation depth: $n \in \{1, 2, 4, 8, 16\}$
\end{itemize}
For each configuration, we generate $N = 10^5$ bits and compute both $C_2$ and Maurer Z-score.

\subsubsection*{Ideal Noiseless Model}

To establish a baseline, we simulated an ideal noiseless system: two square waves at $f_1 = 192.5$ MHz and $f_2 = 136.5$ MHz sampled at $f_{\text{ref}} = 136.5$ MHz. The resulting bit sequence exhibits $C_2 (S) \approx 0.86$ (near-perfect correlation) and fails all randomness tests, confirming the noise requirements of the design.

\subsubsection*{Schmidt's Improved Bound}
Following Equation \eqref{eq:C_bound}, for $k=2$ and $N=10^5$, this yields \cambio{$C_2 (S) \lesssim 0.021$}, compared to the bound by Alon et al. \cite{alon2007} of 0.076. While derived for any random sequence, this provides a target for TRNG designs to achieve $C_2 (S) \ll 0.01$.




\subsection{Results and Discussion}

We measured $C_2(S)$ and Maurer Z-score across 58 parameter configurations (34 ERO and 24 COSO), using the output provided by ERO-TRNG. Additionally we compare to the output of a hardware baseline that implemented a RO-Oscillator based on counting. We summarize our findings in  Fig. \ref{fig:scatter} that shows the joint distribution of $C_2,Z$. 

Some key observations:
\begin{itemize}
    \item Strong positive correlation: Pearson $r = 0.95$ (95\% CI: $[0.91, 0.97]$).
    \item Configurations with $|Z| < 0.1$ consistently achieve $C_2 < 0.01$.
    \item High $C_2$ values ($>0.8$) correlate with $Z > 1$, indicating deviation from random expectation.
    \item XOR accumulation ($n \geq 4$) move the performance of the  configuration  closer to the ideal region ($|Z| \approx 0$, $C_2 < 0.01$).
\end{itemize}

\begin{figure*}[!ht]
    \centering
    \begin{minipage}[b]{0.49\textwidth} 
        \centering
        \includegraphics[height=8.5cm, width=\linewidth, keepaspectratio]{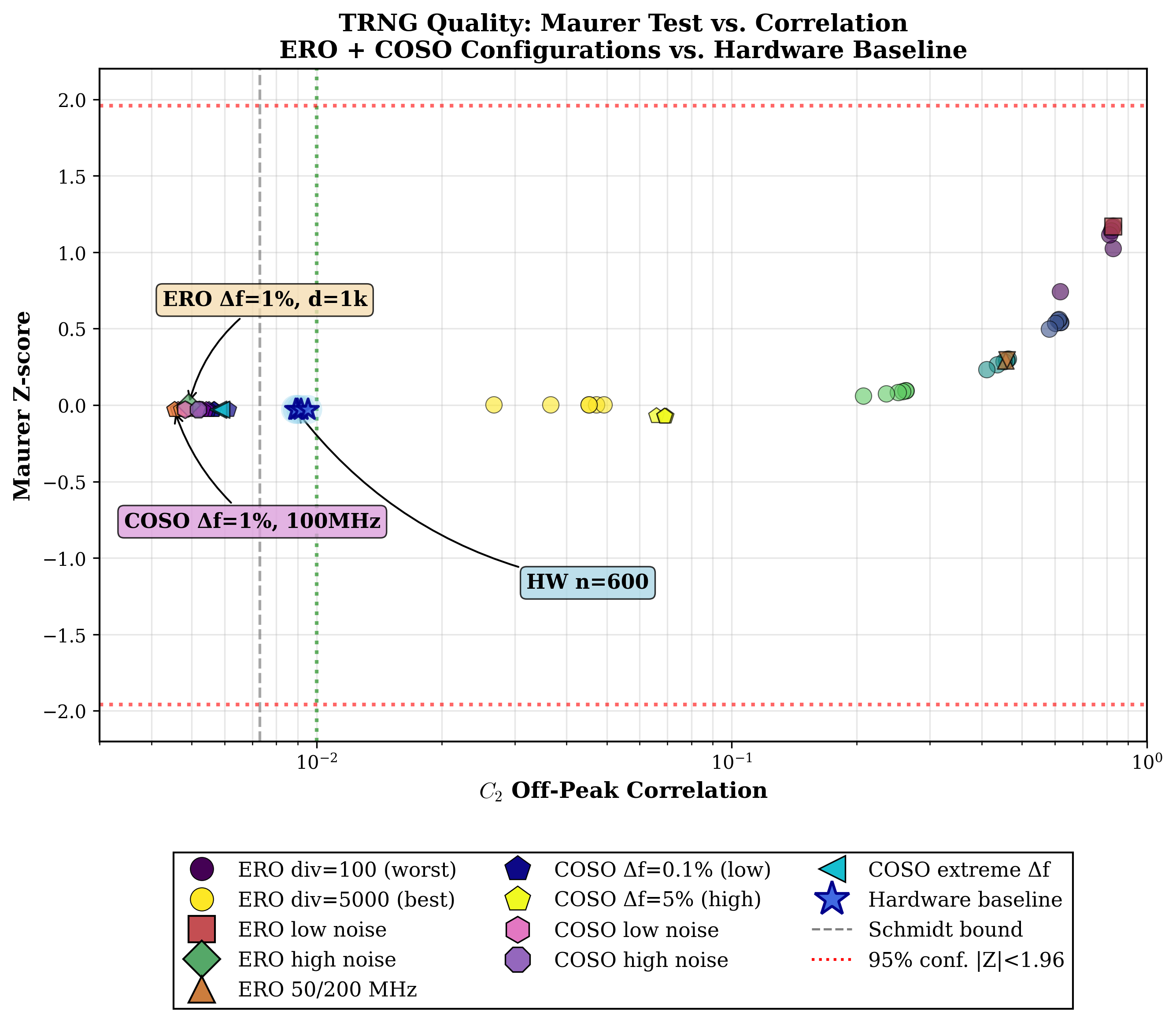}
        \caption{Maurer Z-score vs. off-peak $C_2(S)$ for 58 OpenTRNG configurations. Color indicates XOR accumulation depth ($n=1$ to 16). Dashed line: Schmidt bound in Equation \eqref{eq:C_bound}.}
        \label{fig:scatter}
    \end{minipage}
    \hfill 
\end{figure*}





\subsubsection*{Comparison with Sampling Method}

Following \cite{allini2018}, we compared counter-based extraction  with direct bit sampling. For identical oscillator parameters, the sampling method yielded $C_2 = 0.34$ and $Z = -2.1$ (failed test), while counting achieved $C_2 = 0.009$ and $|Z| < 1$ (passed). This $\approx 40\times$ improvement in $C_2$ agrees with
the improved statistical properties of counter-based designs.

\subsubsection*{Run Statistics}

Bit run lengths provide an intuitive quality metric. Our ideal model (noiseless square waves) exhibits mean run length $\mu = 1.22$, $\sigma = 0.41$. 
For example, hardware baseline shows consistent statistics for counter accumulations of 300--2400, with $\mu \approx 2.0$, $\sigma \approx 1.41$, matching the theoretical random expectation ($\mu = 2$, $\sigma = \sqrt{2} \approx 1.41$). This improvement directly correlates with reduced $C_2(S)$.

\section{Conclusion and future work}
\label{sec:Conclusions}
This paper makes three key contributions:
\begin{itemize}
    \item We establish the first empirical relationship between Maurer's Z-score and off-peak $C_2$ correlation, showing strong positive correlation across OpenTRNG parameter spaces.
    \item We derive explicit relations between $C_k$ and $k$-th order Markov chain transition probabilities, and also between Maurer's Test.

    \item  We provide initial computer simulations using OpenTRNG that provide an insight on the behaviour of Ring Oscillator TRNGs. We validated this framework , showing that counter-based differential extraction yields $C_2(S)$ values 2-3 times below Schmidt's improved theoretical bound.

\end{itemize}

Our work provides TRNG designers with a unified framework: low Maurer Z-scores ($|Z| < 2$) and low off-peak $C_2$ ($< 0.01$) jointly certify output quality, while deviations in either metric signal design flaws.
We presented the first systematic study linking Maurer's Universal Test to Mauduit-S\'ark\"ozy correlation bounds in ring-oscillator TRNGs.
For TRNG designers, our results suggest a practical validation protocol: compute both $C_2$ (via FFT autocorrelation) and Maurer Z-score (via standard implementations); configurations satisfying both $C_2 < 0.01$ and $|Z| < 2$ are likely to pass comprehensive test suites (NIST, AIS31). This unified metric reduces evaluation time and provides early detection of design flaws.
Our Markov chain analysis rigorously connects $C_k$ to transition probabilities, providing a bridge between combinatorial and probabilistic TRNG evaluation.

Future work will explore architecture-specific correlation bounds (tighter than the generic $O(\sqrt{k\ln N/N})$) and extend the framework to multi-source XOR accumulation and higher order correlations.

\section*{Acknowledgment}
The authors want to acknowledge the computational resources provided by the LABCOVI (Laboratorio de Computación y Visualización Avanzada) that belongs to the ``Centro de apoyo técnológico (CAT)'' of Universidad Rey Juan Carlos.

The work of the second author is partially supported by grant PID2023-151238OA-I00 funded by MICIU/AEI
/10.13039/501100011033 and by ERDF, EU.

The work of the third author 
is part of the
``CÁTEDRA UNIVERSIDAD DE CANTABRIA-INCIBE DE NUEVOS RETOS EN CIBERSEGURIDAD '', financed by “European Union NextGeneration-EU, the
Recovery Plan, Transformation and Resilience, through INCIBE.

\bibliographystyle{IEEEtran}
\bibliography{references}

\end{document}